\newtheorem{theorem}{Theorem}
\theoremstyle{definition}
\newtheorem{assumption}{Assumption}
\theoremstyle{definition}
\newtheorem{definition}{Definition}
\title{\LARGE \bf
A Sequential Quadratic Programming Approach to the Solution of Open-Loop Generalized Nash Equilibria
}
\author{Edward L. Zhu and Francesco Borrelli
\thanks{This work was supported by the National Science Foundation under Grant No. 1931853.}
\thanks{Edward L. Zhu, and Francesco Borrelli {\{\tt\small edward.zhu, fborrelli\}@berkeley.edu} are with the Department of Mechanical Engineering at the University of California, Berkeley CA, USA}
}
\begin{document}

\maketitle
\thispagestyle{empty}
\pagestyle{empty}

\begin{abstract}

Dynamic games can be an effective approach to modeling interactive behavior between multiple non-cooperative agents and they provide a theoretical framework for simultaneous prediction and control in such scenarios. In this work, we propose a numerical method for the solution of local generalized Nash equilibria (GNE) for the class of open-loop general-sum dynamic games for agents with nonlinear dynamics and constraints. In particular, we formulate a sequential quadratic programming (SQP) approach which requires only the solution of a single convex quadratic program at each iteration. Central to the robustness of our approach is a non-monotonic line search method and a novel merit function for SQP step acceptance. We show that our method achieves linear convergence in the neighborhood of local GNE and we derive an update rule for the merit function which helps to improve convergence from a larger set of initial conditions. We demonstrate the effectiveness of the algorithm in the context of car racing, where we show up to 32\% improvement of success rate when comparing against a state-of-the-art solution approach for dynamic games. \url{https://github.com/zhu-edward/DGSQP}.

\end{abstract}

\section{Introduction}

Robotic systems are being deployed in the real world with increasing regularity, as such they must be able to operate safely in an environment which is inhabited by other intelligent agents who are each pursuing their own agenda. In the process of doing so, it is likely that interactions between the agents arise due to limitations on the shared workspace. How the robot handles such interactions is critical to its performance and safety as these scenarios typically involve inter-agent constraints becoming active. The primary challenge here is that of information availability, where agents, either due to technological limitations or to maintain a competitive advantage, do not necessarily share information with each other about their intentions or future plans. It is therefore crucial to model the behavior of the other agents in the environment when planning the actions of the robot.

Traditional approaches to this problem adopt a pipeline architecture where an upstream module provides forecasts of the behavior of other agents \cite{betz2022autonomous,montemerlo2008junior}. These forecasts are then treated as constant during the downstream planning phase when an action plan is formulated for the robot. The drawback of such an architecture is that the effects of the robot's actions are not considered when constructing the predictions of the behavior of the other agents, which can be problematic as interactions typically affect all of the agents involved. It is therefore desirable to \emph{simultaneously} tackle the task of prediction and planning in order to better capture the effect of interactions on all agents in the environment.

Dynamic non-cooperative games \cite{bacsar1998dynamic} provide a theoretical framework for simultaneous prediction and planning for multiple ego-centric agents and has seen many applications in trajectory optimization for robotic systems \cite{cleac2020algames,fridovich2020efficient,schwarting2019social,laine2021multi}. This is done by formulating a set of coupled optimization problems which describe the behavior of an agent as a function of the others' in the environment. Two solution or equilibrium concepts are common for dynamic games, namely the Stackelberg and Nash equilibria, which make different assumptions on the structure of the game. A Stackelberg equilibrium can be found for a game with an explicit leader-follower hierarchy \cite{liniger2019noncooperative}, whereas a Nash equilibrium models the case when agents make their decisions simultaneously. Our work focuses on the selection of Nash equilibria, which we believe to be a good fit for modeling the behavior of ego-centric agents when no \emph{a priori} structure is imposed on the order of the interactions. More specifically, we are interested in generalized Nash equilibria (GNE) \cite{facchinei2010generalized} for dynamic games with state and input constraints.

In this work, we propose an iterative approach for finding GNE of a discrete-time dynamic game based on sequential quadratic programming (SQP), which builds upon ideas from \cite{laine2021computation}. In particular, we are able to handle general nonlinear game dynamics and constraints on both the game state and agent actions. By leveraging a non-monotone line search strategy with a novel merit function for step selection, our approach is able to converge to a solution in more cases of our simulation study when compared to a state-of-the-art augmented Lagrangian approach \cite{cleac2020algames}. Our primary contributions are the following:
\begin{enumerate}
    \item A solver for GNE of dynamic games with nonlinear dynamics and constraints.
    \item Proof of local convergence for the SQP based approach.
    \item A novel merit function, which when used in conjunction with a non-monotone line search strategy greatly improves solver convergence in practice.
    \item A simulation study in the context of car racing which shows up to 32\% improvement in success rate when comparing our approach with a state-of-the-art method.
\end{enumerate}

\noindent \emph{Related Work:} Many approaches to game-theoretic trajectory optimization have been proposed in the literature. \cite{fridovich2020efficient, schwarting2021stochastic, kavuncu2021potential} take a differential dynamic programming approach to obtain a linear quadratic approximation of the dynamic game. However, the approach is unable to explicitly account for inequality constraints and instead must include them in the cost function via barrier functions. This can obfuscate the meaning of the game as the cost measures both performance and constraint violation. \cite{spica2020real,wang2021game} formulate a decomposition based method called Iterative Best Response where the agents improve their strategy in a sequential manner while holding the behavior of all other agents fixed. It is shown that fixed points of this algorithm correspond to GNE. However, the method requires the solution of the same number of optimization problems as there are agents and can be slow to converge in practice. In contrast, our approach only requires the solution of a single optimization problem at each iteration. The approach proposed in this work is inspired by and builds upon \cite{cleac2020algames} and \cite{laine2021computation} with some major improvements which are demonstrated to be essential to effectively solve the class of problems we are interested in. \cite{laine2021computation} proposes a similar SQP approach, but does not investigate its convergence properties. Compared to \cite{laine2021computation}, we also introduce a new merit function and line search strategy which improves solver convergence. \cite{cleac2020algames} proposes a GNE solver based on an augmented Lagrangian approach. The solver, called ALGAMES, shows good  performance when compared to \cite{fridovich2020efficient}. However, as will be shown in our comparison, ALGAMES appears to struggle with convergence in the context of car racing where more complex dynamics and environments are introduced.

\section{Problem Formulation}

Consider an $M$-agent, finite-horizon, discrete-time, general-sum, open-loop, dynamic game whose state is characterized by the joint dynamical system:
\begin{align} \label{eq:joint_dynamics}
    x_{k+1} = f(x_k, u_k),
\end{align}
where $x_k^i \in \mathcal{X}^i$ and $u_k^i \in \mathcal{U}^i$ are the state and input of agent $i$ at time step $k$  and
\begin{align*}
    x_k &= \begin{bmatrix} {x_k^1}^\top, \dots, {x_k^M}^\top \end{bmatrix}^\top \in \mathcal{X}^1 \times \dots \times \mathcal{X}^M = \mathcal{X} \subseteq \mathbb{R}^n \\
    u_k &= \begin{bmatrix} {u_k^1}^\top, \dots, {u_k^M}^\top \end{bmatrix}^\top \in \mathcal{U}^1 \times \dots \times \mathcal{U}^M = \mathcal{U} \subseteq  \mathbb{R}^m,
\end{align*}
are the concatenated states and inputs of all agents. In this work, we will use the notation $x_k^{\neg i}$ and $u_k^{\neg i}$ to denote the vector of states and inputs of all but the $i$-th agent.

Each agent $i$ attempts to minimize its own cost function, which is comprised of stage costs $l_k^i$ and terminal cost $l_N^i$, over a horizon of length $N$:
\begin{subequations} \label{eq:agent_cost}
\begin{align} 
    \bar{J}^i(\mathbf{x}, \mathbf{u}^i) &= \sum_{k = 0}^{N-1} l_k^i(x_k, u_k^i) + l_N^i(x_N) \label{eq:agent_cost_xu} \\
    &= J^i(\mathbf{u}^1, \dots, \mathbf{u}^M, x_0), \label{eq:agent_cost_u}
\end{align}
\end{subequations}
where $\mathbf{x} = \{x_0, \dots, x_N\}$ and $\mathbf{u}^i = \{u_0^i, \dots, u_{N-1}^i\}$ denote state and input sequences over the horizon. Note that the cost in \eqref{eq:agent_cost_xu} for agent $i$ depends on its \emph{own} inputs and the \emph{joint} state. We arrive at \eqref{eq:agent_cost_u} by recursively substituting in the dynamics \eqref{eq:joint_dynamics} to the cost function, which are naturally a function of the open-loop input sequences for all agents. The agents are additionally subject to $n_c$ constraints
\begin{align} \label{eq:joint_constraints}
    C(\mathbf{u}^1, \dots, \mathbf{u}^M, x_0) \leq 0,
\end{align}
which can be used to describe individual constraints as well as coupling between agents and where we have once again made the dependence on the joint dynamics implicit. For the sake of brevity, when focusing on agent $i$, we omit the inital state $x_0$ and write the cost and constraint functions as $J^i(\mathbf{u}^i,\mathbf{u}^{\neg i})$ and $C(\mathbf{u}^i,\mathbf{u}^{\neg i})$. Let us now define the conditional constraint set
\begin{align*}
    \mathcal{U}^i(\mathbf{u}^{\neg i}) =& \{ \mathbf{u}^i \ | \ C(\mathbf{u}^i, \mathbf{u}^{\neg i}) \leq 0 \},
\end{align*}
which can be interpreted as a restriction of the joint constraint set for agent $i$ given some $\mathbf{u}^{\neg i}$. We make the following assumption about the functions and constraint sets.
\begin{assumption} \label{asm:compact_set_differentiability}
    The sets $\mathcal{X}^i$ and $\mathcal{U}^i$ are compact and the functions $f$, $J^i$, and $C$ are twice continuously differentiable on $\mathcal{X}$ and $\mathcal{U}$ for all $i \in \{1, \dots, M\}$.
\end{assumption}

\subsection{Generalized Nash Equilibrium}

We define the constrained dynamic game as the tuple:
\begin{align} \label{eq:dynamic_game}
    \Gamma = (N, \mathcal{X}, \mathcal{U}, f, \{J^i\}_{i=1}^M, C).
\end{align}
For such a game, a GNE is attained at the set of feasible input sequences $\mathbf{u} = \{\mathbf{u}^{i}\}_{i=1}^M$ which minimize \eqref{eq:agent_cost} for all agents $i$. Formally, we define this solution concept as follows:
\begin{definition}
    A generalized Nash equilibrium (GNE) \cite{facchinei2010generalized} for the dynamic game $\Gamma$ is the set of open-loop solutions $\mathbf{u}^{\star} = \{\mathbf{u}^{i,\star}\}_{i=1}^M$ such that for each agent $i$:
    \begin{align}
        J^i(\mathbf{u}^{i,\star}, \mathbf{u}^{\neg i,\star}) \leq J^i(\mathbf{u}^{i}, \mathbf{u}^{\neg i,\star}), \ \forall \mathbf{u}^{i} \in \mathcal{U}^i(\mathbf{u}^{\neg i,\star}). \nonumber
    \end{align}
    If the condition holds only in some local neighborhood of $\mathbf{u}^{i,\star}$, then $\mathbf{u}^{\star}$ is denoted as  a local GNE.
\end{definition}
In other words, at a local GNE, agents cannot improve their cost by unilaterally perturbing their open-loop solution in a locally feasible direction.  The local GNE for agent $i$ can be obtained equivalently by solving the following constrained finite horizon optimal control problems (FHOCP):
\begin{align} \label{eq:agent_fhocp}
    \mathbf{u}^{i,\star}(\mathbf{u}^{\neg i,\star}) = \arg\min_{\mathbf{u}^i} \ & \ J^i(\mathbf{u}^i, \mathbf{u}^{\neg i,\star}) \\
    \text{subject to} \ & \ C(\mathbf{u}^i, \mathbf{u}^{\neg i,\star}) \leq 0. \nonumber
\end{align}
where $\mathbf{u}^{\neg i,\star}$ correspond to local GNE solutions for the other agents. Note that we are assuming uniqueness of the local GNE of \eqref{eq:dynamic_game}. This assumption will be made formal in the next section. A distinct advantage of using \eqref{eq:agent_fhocp} to model agent interactions is that a dynamic game allows for a direct representation of agents with competing objectives as the $M$ objectives are considered separately instead of being summed together, which is typical in cooperative multi-agent approaches \cite{zhu2020trajectory}.

\section{An SQP Approach to Dynamic Games} \label{sec:sqp_approach}

In this section, we propose a method which iteratively solves for open-loop local GNE of dynamic games using sequential quadratic approximations. In particular, we will derive the algorithm and present guarantees on local convergence, which is based on established SQP theory \cite{boggs1995sequential}. We begin by defining the Lagrangian functions for the $M$ coupled FHOCPs in \eqref{eq:agent_fhocp}:
\begin{align*}
    \mathcal{L}^i(\mathbf{u}^i, \mathbf{u}^{\neg i,\star}, \lambda^i)  = J^i(\mathbf{u}^i, \mathbf{u}^{\neg i,\star}) +  C(\mathbf{u}^i, \mathbf{u}^{\neg i,\star})^\top \lambda^i,
\end{align*}
where we have again omitted the dependence on the initial state $x_0$ for brevity. As in \cite{cleac2020algames}, we require that the Lagrange multipliers $\lambda^i \geq 0$ are equal over all agents, i.e $\lambda^i = \lambda^j = \lambda$, $\forall i,j \in \{1, \dots,M\}$. Since the multipliers reflect the sensitivity of the optimal cost w.r.t. constraint violation, this can be interpreted as a requirement for parity in terms of the cost of constraint violation for all agents. Under this condition, the GNE from \eqref{eq:agent_fhocp} are also known as normalized Nash equilibria \cite{rosen1965existence}.

A direct consequence of writing the constrained dynamic game in the coupled nonlinear optimization form of \eqref{eq:agent_fhocp} is that, subject to regularity conditions, solutions of \eqref{eq:agent_fhocp} must satisfy the KKT conditions below:
\begin{subequations} \label{eq:kkt}
\begin{align} 
    \nabla_{\mathbf{u}^i} \mathcal{L}^i(\mathbf{u}^{i,\star}, \mathbf{u}^{\neg i,\star}, \lambda^\star) &= 0, \ \forall i = 1, \dots, M, \label{eq:stationarity}\\
    C(\mathbf{u}^{1,\star},\dots, \mathbf{u}^{M,\star}) &\leq 0, \label{eq:primal_feasibility}\\
    C(\mathbf{u}^{1,\star},\dots, \mathbf{u}^{M,\star})^{\top}\lambda^\star &= 0, \\
    \lambda^\star & \geq 0.
\end{align}
\end{subequations}

We therefore propose to find a local GNE as a solution to the KKT system \eqref{eq:kkt} in an iterative fashion starting from an initial guess for the primal and dual solution, which we denote as $\mathbf{u}^i_0$ and $\lambda_0 \geq 0$ respectively, and taking steps $p_q^i$ and $p_q^\lambda$, at iteration $q$, to obtain the sequence of iterates:
\begin{align} \label{eq:sqp_step}
    \mathbf{u}^i_{q+1} = \mathbf{u}^i_q + p_q^i, \ \lambda_{q+1} = \lambda_q + p_q^\lambda.
\end{align}
In particular, we form a quadratic approximation of \eqref{eq:stationarity} and linearize the constraints in \eqref{eq:primal_feasibility} about the primal and dual solution at iteration $q$ in a SQP manner \cite{wright1999numerical} as follows:
\begin{align} \label{eq:sqp_approximation}
    L_q &= \begin{bmatrix}
    \nabla_{\mathbf{u}^1}^2 \mathcal{L}_q^1 & \nabla_{\mathbf{u}^2,\mathbf{u}^1} \mathcal{L}_q^1 & \dots & \nabla_{\mathbf{u}^M,\mathbf{u}^1} \mathcal{L}_q^1 \\
    \nabla_{\mathbf{u}^1,\mathbf{u}^2} \mathcal{L}_q^2 & \nabla_{\mathbf{u}^2}^2 \mathcal{L}_q^2 & \dots & \nabla_{\mathbf{u}^M,\mathbf{u}^2} \mathcal{L}_q^2 \\
    \vdots & \vdots & \ddots & \vdots \\
    \nabla_{\mathbf{u}^1,\mathbf{u}^M} \mathcal{L}_q^M & \nabla_{\mathbf{u}^2,\mathbf{u}^M} \mathcal{L}_q^M & \dots & \nabla_{\mathbf{u}^M}^2 \mathcal{L}_q^M
    \end{bmatrix}, \nonumber\\
    h_q &= \begin{bmatrix} \nabla_{\mathbf{u}^1} J_q^1 & \nabla_{\mathbf{u}^2} J_q^2 & \dots & \nabla_{\mathbf{u}^M} J_q^M \end{bmatrix}^\top, \nonumber\\
    G_q &= \begin{bmatrix} \nabla_{\mathbf{u}^1} C_q & \nabla_{\mathbf{u}^2} C_q & \dots & \nabla_{\mathbf{u}^M} C_q \end{bmatrix}, \nonumber\\
    B_q &= \text{proj}_{\succeq 0}((L_q+L_q^\top)/2) + \epsilon I,
\end{align}
where the subscript $q$ indicates that the corresponding quantity is evaluated at the primal and dual iterate $\mathbf{u}_q$ and $\lambda_q$. Here, $\epsilon \geq 0$ is a regularization coefficient, $I$ is the identity matrix of appropriate size, and $\text{proj}_{\succeq 0} = \sum_{i=1}^n \max\{0, s_i\}v_iv_i^\top$ denotes the operation which projects the symmetric matrix $X\in\mathbb{R}^{n\times n}$ onto the positive semi-definite cone, where $s_i$ and $v_i$ denote the $i$-th eigenvalue and eigenvector of $X$ respectively. Using this approximation, we solve for the step in the primal variables via the following convex quadratic program (QP):
\begin{subequations} \label{eq:sqp_qp}
\begin{align} 
    p_q^\mathbf{u} = \arg \min_{p^1, \dots, p^M} \ & \ \frac{1}{2} {p^\mathbf{u}}^\top B_q {p^\mathbf{u}} + h_q^\top {p^\mathbf{u}} \label{eq:sqp_qp_cost}\\
    \text{subject to} \ & \ C_q + G_q {p^\mathbf{u}} \leq 0. \label{eq:sqp_qp_ineq_constraint}
\end{align}
\end{subequations}
where we denote $p^\mathbf{u} = [{p^1}^\top, \dots, {p^M}^\top]^\top$. Denote the Lagrange multipliers corresponding to the solution of \eqref{eq:sqp_qp} as $d_q$. We then define the step in the dual variables as 
\begin{align} \label{eq:dual_step}
    p_q^\lambda = d_q - \lambda_q,
\end{align}
which maintains nonnegativity of the dual iterates given $\lambda_0 \geq 0$. We note that in contrast to the approach used in \cite{spica2020real} and \cite{wang2021game}, which require the solution of $M$ optimization problems, our SQP procedure requires the solution of only a single QP at each iteration.

\subsection{Local Behavior of Dynamic Game SQP}

We make the following assumptions about the primal and dual solutions of \eqref{eq:agent_fhocp}:
\begin{assumption} \label{asm:optimality}
    Solutions $\{\mathbf{u}^{i,\star}\}_{i=1}^M$ and $\lambda^\star$ of \eqref{eq:agent_fhocp} satisfy the following, for each $i\in\{1,\dots,M\}$:
    \begin{itemize}
        \item $\lambda^\star \perp C(\mathbf{u}^{i,\star}, \mathbf{u}^{\neg i,\star})$ and $\lambda_j^\star = 0 \iff C_j(\mathbf{u}^{i,\star}, \mathbf{u}^{\neg i,\star}) < 0$ for $j=1,\dots,n_c$.
        \item The rows of the Jacobian of the active constraints at the local GNE, i.e. $\nabla_{\mathbf{u}^i}\bar{C}(\mathbf{u}^{i,\star}, \mathbf{u}^{\neg i,\star})$, are linearly independent.
        \item $d^\top \nabla_{\mathbf{u}^i}^2 \mathcal{L}^i(\mathbf{u}^{i,\star}, \mathbf{u}^{\neg i,\star}, \lambda^\star) d > 0$, $\forall d \neq 0$ such that $\nabla_{\mathbf{u}^i}\bar{C}(\mathbf{u}^{i,\star}, \mathbf{u}^{\neg i,\star})^\top d = 0$.
    \end{itemize}
\end{assumption}
The first assumption is strict complementary slackness, the second is the linear independence constraint qualification (LICQ), and the third states that the Hessian of the Lagrangian function is positive definite on the null space of the active constraint Jacobians at the solution. It is straight forward to see that \eqref{eq:kkt} and Assumption~\ref{asm:optimality} together constitute necessary and sufficient conditions for a primal and dual solution of \eqref{eq:agent_fhocp} for agent $i$ to be locally optimal and unique. When these conditions hold for the solutions over all agents, satisfaction of the requirements for a unique local GNE follow immediately. This result was proven formally in \cite{laine2021computation}. Note that, as in \cite{boggs1995sequential} and \cite{laine2021computation}, Assumption~\ref{asm:optimality} is standard and can be verified a posteriori.

To analyze the local behavior of the iterative procedure as defined by \eqref{eq:sqp_step}, \eqref{eq:sqp_approximation}, and \eqref{eq:sqp_qp}, let us assume that $\mathbf{u}_0$ and $\lambda_0$ are close to the optimal solution and the subset of active constraints at the local GNE, which we denote as $\bar{C}$, with Jacobian $\bar{G}$, is known and constant at each iteration $q$. Therefore, for the purposes of this section, we can replace the inequality constraint in \eqref{eq:sqp_qp_ineq_constraint} with the equality constraint:
\begin{align} \label{eq:sqp_qp_eq_constraint}
    \bar{C}_q + \bar{G}_q p^{\mathbf{u}} = 0.
\end{align}
We refer to the QP constructed from \eqref{eq:sqp_qp_cost} and \eqref{eq:sqp_qp_eq_constraint} as EQP.

In the traditional derivation of the SQP procedure \cite{boggs1995sequential,wright1999numerical}, it was shown that under the aforementioned assumptions, the SQP step computed is identical to a Newton step for the corresponding KKT system. The SQP step therefore inherits the quadratic convergence rate of Newton's method in a local neighborhood of the optimal solution \cite[Theorem 3.1]{boggs1995sequential}. However, in the case of dynamic games, the equivalence between the SQP procedure and Newton's method is no longer exact since the matrix $L_q$ is not symmetric in general. To see this, let us first state the joint KKT system for the equality constrained version of \eqref{eq:agent_fhocp}:
\begin{align} \label{eq:root_finding_problem}
    & F(\mathbf{u}^{1,\star}, \dots, \mathbf{u}^{M,\star}, \lambda^\star) \\
    & \qquad = \begin{bmatrix}
        \nabla_{\mathbf{u}^1} \mathcal{L}^1(\mathbf{u}^{1,\star},\dots, \mathbf{u}^{M,\star}, \lambda^\star) \\
        \vdots \\
        \nabla_{\mathbf{u}^M} \mathcal{L}^M(\mathbf{u}^{1,\star},\dots, \mathbf{u}^{M,\star}, \lambda^\star) \\
        \bar{C}(\mathbf{u}^{1,\star},\dots, \mathbf{u}^{M,\star})
    \end{bmatrix} = 0, \nonumber
\end{align}
For the system of equations \eqref{eq:root_finding_problem}, the Newton step at iteration $q$ is the solution of the linear system:
\begin{align} \label{eq:newton_step}
    \begin{bmatrix}
    L_q & \bar{G}_q^\top \\
    \bar{G}_q & 0
    \end{bmatrix} \begin{bmatrix} \bar{p}_q^\mathbf{u} \\ \bar{p}_q^\lambda \end{bmatrix} = - \begin{bmatrix} h_q + \bar{G}_q^\top \lambda_q \\ \bar{C}_q \end{bmatrix}.
\end{align}
On the other hand, by the first order optimality conditions for EQP, we have that the SQP step must satisfy
\begin{align} \label{eq:sqp_kkt}
    \begin{bmatrix}
    B_q & \bar{G}_q^\top \\
    \bar{G}_q & 0
    \end{bmatrix} \begin{bmatrix} p_q^\mathbf{u} \\ p_q^\lambda \end{bmatrix} = - \begin{bmatrix} h_q + \bar{G}_q^\top \lambda_q \\ \bar{C}_q \end{bmatrix}.
\end{align}
When the matrix $L_q$ is positive definite and $\epsilon=0$, \eqref{eq:sqp_kkt} and \eqref{eq:newton_step} are equivalent. This corresponds to the special case of potential games \cite{zhu2008lagrangian}. However, this is not true in general for our SQP step, which implies that we cannot inherit the quadratic convergence of Newton's method. Instead, the SQP step from \eqref{eq:sqp_qp_cost} and \eqref{eq:sqp_qp_eq_constraint} can be seen as a symmetric approximation to the Newton step. As such, we establish guaranteed local linear convergence for our SQP procedure via established theory for SQP with approximate Hessians. Before proving the main result of this section, let us first define the bounded deterioration property, which essentially requires that the distance between a matrix and its approximations are bounded.
\begin{definition} \label{def:bounded_deterioration}
    A sequence of matrix approximations $\{B_q\}$ to $L^\star$ for the SQP method is said to have the property of \emph{bounded deterioration} if there exist constants $\alpha_1$ and $\alpha_2$ independent of $q$ such that:
    \begin{align*}
        \|B_{q+1}-L^\star\| \leq (1+\alpha_1 \sigma_q) \|B_q-L^\star\| + \alpha_2 \sigma_q,
    \end{align*}
    where $\sigma_q = \max(\|\mathbf{u}_{q+1}-\mathbf{u}^\star\|, \|\mathbf{u}_{q}-\mathbf{u}^\star\|, \|\lambda_{q+1}-\lambda^\star\|, \|\lambda_{q}-\lambda^\star\|)$, and $L^\star$ denotes the matrix $L$ in \eqref{eq:sqp_approximation} evaluated at the solution $\mathbf{u}^\star$, $\lambda^\star$.
\end{definition}
\begin{theorem}
Consider the dynamic game defined by \eqref{eq:dynamic_game}. Let Assumptions~\ref{asm:compact_set_differentiability} and \ref{asm:optimality} hold. Then there exist positive constants $\epsilon_1$ and $\epsilon_2$ such that if
\begin{align*}
    \|\mathbf{u}_0 - \mathbf{u}^\star\| \leq \epsilon_1, \ \|B_0 - L^\star\| \leq \epsilon_2,
\end{align*}
and $\lambda_0 = -(\bar{G}_0 \bar{G}_0^\top)^{-1}\bar{G}_0 h_0$, then the sequence $(\mathbf{u}_q, \lambda_q)$ generated by the SQP procedure \eqref{eq:sqp_step} and \eqref{eq:sqp_qp} converges linearly to $(\mathbf{u}^\star, \lambda^\star)$.
\end{theorem}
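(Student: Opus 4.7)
The plan is to cast the dynamic-game SQP iteration \eqref{eq:sqp_step}--\eqref{eq:sqp_qp} as an instance of SQP with an approximate Hessian and then invoke an off-the-shelf local linear convergence theorem from the quasi-Newton SQP literature (e.g. the result underlying \cite[Thm.~3.2]{boggs1995sequential}). Observe first that, under the local active-set stability implied by strict complementary slackness in Assumption~\ref{asm:optimality}, for iterates in a sufficiently small neighborhood of $(\mathbf{u}^\star,\lambda^\star)$ the inequality constraint \eqref{eq:sqp_qp_ineq_constraint} reduces to the equality constraint \eqref{eq:sqp_qp_eq_constraint}, so that the SQP step satisfies the linear system \eqref{eq:sqp_kkt}. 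Thus the SQP step is exactly the Newton step \eqref{eq:newton_step} for the KKT system \eqref{eq:root_finding_problem} with the nonsymmetric $L_q$ replaced by the symmetric $B_q$. This is the classical setting of SQP with approximate Hessians.

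Next I would verify the structural hypotheses required by the convergence theorem at the solution. The LICQ part of Assumption~\ref{asm:optimality} ensures that the KKT matrix on the left of \eqref{eq:sqp_kkt} is invertible whenever $B_q$ is positive definite on the null space of $\bar{G}^\star$; the second-order part of Assumption~\ref{asm:optimality} guarantees that $L^\star$ itself is positive definite on that null space, hence the same holds for any $B$ sufficiently close to $L^\star$. Combined with Assumption~\ref{asm:compact_set_differentiability} (smoothness of $f$, $J^i$, $C$), this gives uniform bounds on $\|K_q^{-1}\|$ where $K_q$ is the KKT matrix, which is the standard ingredient used to translate the residual of \eqref{eq:root_finding_problem} into a bound on the step error.

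The main technical step is to verify that the sequence $\{B_q\}$ enjoys the bounded deterioration property of Definition~\ref{def:bounded_deterioration}. Here I would exploit the explicit closed-form construction in \eqref{eq:sqp_approximation}: $B_q = \Pi_{\succeq 0}\bigl(\tfrac12(L_q+L_q^\top)\bigr) + \epsilon I$. Since twice continuous differentiability of the problem data (Assumption~\ref{asm:compact_set_differentiability}) makes $L_q$ a Lipschitz function of $(\mathbf{u}_q,\lambda_q)$ on the compact set $\mathcal{X}\times\mathcal{U}$, the symmetrization is trivially Lipschitz, and the PSD projection is nonexpansive in the Frobenius norm. Thus there is a constant $\kappa$ with $\|B_{q+1}-B_q\| \leq \kappa\,\sigma_q$, which, combined with a triangle inequality $\|B_{q+1}-L^\star\| \leq \|B_q-L^\star\| + \|B_{q+1}-B_q\|$, yields bounded deterioration with $\alpha_1 = 0$ and $\alpha_2 = \kappa$. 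I anticipate this is the place where care is most needed, because the PSD projection is only directionally differentiable and one has to be careful that symmetrizing $L^\star$ and then projecting does not introduce a residual error larger than $O(\sigma_q)$; this is handled by first reducing $\epsilon$ and the neighborhood size so that $B^\star := \Pi_{\succeq 0}(\tfrac12(L^\star+{L^\star}^\top))+\epsilon I$ lies within $\epsilon_2$ of $L^\star$, which is possible because the positive-definiteness of $L^\star$ on the active null space makes the indefinite part of $L^\star$ small in the relevant subspace.

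Finally, the specified initialization $\lambda_0 = -(\bar{G}_0\bar{G}_0^\top)^{-1}\bar{G}_0 h_0$ is the least-squares multiplier estimate associated with $\mathbf{u}_0$; by LICQ and continuity it satisfies $\|\lambda_0 - \lambda^\star\| = O(\|\mathbf{u}_0-\mathbf{u}^\star\|)$, so shrinking $\epsilon_1$ puts the initial dual inside the required neighborhood as well. With bounded deterioration of $\{B_q\}$, the closeness assumptions $\|\mathbf{u}_0-\mathbf{u}^\star\|\leq\epsilon_1$ and $\|B_0-L^\star\|\leq\epsilon_2$, and the KKT regularity supplied by Assumption~\ref{asm:optimality}, I would then invoke the classical local q-linear convergence result for approximate-Hessian SQP to conclude that $(\mathbf{u}_q,\lambda_q)\to(\mathbf{u}^\star,\lambda^\star)$ linearly, completing the proof.
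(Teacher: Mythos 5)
Your proposal is correct and follows essentially the same route as the paper: reduce to the equality-constrained QP under active-set stability, verify positive definiteness of $B_q$ on the null space of $\bar{G}_q$, establish bounded deterioration with $\alpha_1=0$ via a Lipschitz bound on $\|B_{q+1}-B_q\|$ and the triangle inequality, and invoke the Boggs--Tolle approximate-Hessian SQP convergence theorem. Your use of the nonexpansiveness of the PSD projection to justify the Lipschitz bound is in fact slightly more careful than the paper's appeal to continuity of the primal--dual solution maps, but it is the same argument in substance.
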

\begin{proof}
We obtain the result by showing that the conditions of \cite[Theorem 3.3]{boggs1995sequential} are satisfied. The first condition requires that the approximations $B_q$ are positive definite on the null space of $\bar{G}_q$. Since $B_q$ is constructed by projecting $L_q$ into the positive definite cone, this condition is satisfied trivially. 

The second condition requires that the sequence $\{B_q\}$ satisfies the bounded deterioration property from Definition~\ref{def:bounded_deterioration}. To show this, we begin with the following derivation:
\begin{align*}
    \|B_{q+1}-L^\star\| &= \|B_{q+1}-B_q + B_q - L^\star\| \\
    & \leq \|B_q - L^\star\| + \|B_{q+1}-B_q\|.
\end{align*}
From the above, it can be seen that $\{B_q\}$ satisfies the bounded deterioration property with $\alpha_1 = 0$ if there exists $\alpha_2$ such that $\|B_{q+1}-B_q\| \leq \alpha_2 \sigma_q$. For the equality constrained QP defined by \eqref{eq:sqp_qp_cost} and \eqref{eq:sqp_qp_eq_constraint}, we can write the dual solution at iteration $q$ analytically as $d_q = (\bar{G}_q B_q^{-1} \bar{G}_q^\top)^{-1}(\bar{C}_q - \bar{G}_q B_q^{-1} h_q)$ and the primal solution as $p_q^\mathbf{u}=-B_q^{-1}(h_q + \bar{G}_q^\top d_q)$, which by Assumption~\ref{asm:compact_set_differentiability} are both continuous in $\mathbf{u}_q$ and $\lambda_q$. Therefore, setting $\mathbf{u}_{q+1} = \mathbf{u}_{q} + p_q^\mathbf{u}$, $\lambda_{q+1} = d_q$, and noting that $B_q$ is also continuous in $\mathbf{u}_q$ and $\lambda_q$, we have that there exists $\beta_1$ and $\beta_2$ such that
\begin{align*}
    \|B_{q+1}-B_q\| &\leq \beta_1 \|\mathbf{u}_{q+1}-\mathbf{u}_{q}\| + \beta_2 \|\lambda_{q+1} - \lambda_{q}\| \\
    &\leq \beta_1 (\|\mathbf{u}_{q+1}-\mathbf{u}^\star\|+\|\mathbf{u}_{q}-\mathbf{u}^\star\|) \\
    &\quad + \beta_2 (\|\lambda_{q+1} - \lambda^\star\| + \|\lambda_{q} - \lambda^\star\|) \\
    &\leq \text{max}(\beta_1, \beta_2) \sigma_q,
\end{align*}
which satisfies the bounded deterioration property with $\alpha_2 = \text{max}(\beta_1, \beta_2)$. Given Assumption~\ref{asm:optimality} holds, linear convergence of the SQP iterations follows directly as a consequence of \cite[Theorem 3.3]{boggs1995sequential}.
\end{proof}

\section{A Novel Merit Function and Non-monotone Line Search Strategy for Dynamic Game SQP} \label{sec:practical_considerations}

We have shown that our proposed SQP approach exhibits linear convergence when close to a local GNE. However, as is commonly seen with numerical methods for nonlinear optimization, a na{\"i}ve implementation of the procedure defined by \eqref{eq:sqp_step}, \eqref{eq:sqp_approximation}, and \eqref{eq:sqp_qp} often performs poorly due to overly aggressive steps leading to diverging iterates. In this section, we introduce a merit function and line search method which will help address this problem in practice. These components will be used to determine how much of the SQP step $p_q^\mathbf{u}$ and $p_q^\lambda$ can be taken to make progress towards a local GNE while remaining in a region about the current iterate where the QP approximation \eqref{eq:sqp_qp} is valid. In the numerical examples that we propose at the end of the paper, this modification is crucial in improving the practicality of the proposed SQP approach when given initial guesses beyond the immediate neighborhood of a local GNE.

\subsection{A Novel Merit Function}

Merit functions are commonly used in conjunction with a backtracking line search technique as a mechanism for measuring progress towards the optimal solution and limiting the step size taken by an iterative solver \cite{wright1999numerical}. This is important as the quality of the QP approximation \eqref{eq:sqp_qp} may be poor when far away from the iterate, which could result in divergence of the iterates if large steps are taken.

In traditional constrained optimization, merit functions typically track a combination of cost value and constraint violation. In the context of dynamic games, this is not as straightforward as the agents may have conflicting objectives and a proposed step may result in an increase in the objectives of some agents along with a decrease in others'. We also cannot just simply sum the objectives as minimizers of the combined cost function may not be local GNE. We therefore propose the following merit function:
\begin{align} \label{eq:merit_function}
    \phi(\mathbf{u},&\lambda, s;\mu) = \nonumber \\
    &\frac{1}{2} \| \underbrace{\begin{bmatrix} \nabla_{\mathbf{u}^1} \mathcal{L}^1(\mathbf{u}, \lambda) \\
        \vdots \\
        \nabla_{\mathbf{u}^M} \mathcal{L}^M(\mathbf{u}, \lambda) \end{bmatrix}}_{=\nabla \mathcal{L}(\mathbf{u},\lambda)} \|_2^2  + \mu \|C(\mathbf{u}) - s\|_1, 
\end{align}
and define $\gamma(\mathbf{u},\lambda) = (1/2)\|\nabla \mathcal{L}(\mathbf{u},\lambda)\|_2^2$. The slack variable $s = \text{min}(0, C(\mathbf{u}))$ is defined element-wise such that $C-s$ captures violation of the inequality constraints and we define the step $p^s = C(\mathbf{u}) + G(\mathbf{u})p^\mathbf{u} - s$. Compared to the merit function from \cite{laine2021computation}, which only included the first term of \eqref{eq:merit_function}, the novelty of our's is the $l^1$ norm term, whose purpose will be described shortly. Instead of measuring the agent objectives, our merit function tracks the first order optimality conditions in addition to constraint violation. It should be easy to see that the merit function attains a minimum of zero at any local GNE. However, we note that this merit function is not \emph{exact} \cite{wright1999numerical} since the first order conditions are only necessary for optimality.

Since we would like the sequence of iterates to converge to the minimizers of $\phi$, it follows that at each iteration we would like take a step in a descent direction of $\phi$. As such, we will now, in a manner similar to \cite{wright1999numerical}, analyze the directional derivative of the proposed merit function in the direction of the steps computed from \eqref{eq:sqp_qp} and describe a procedure for choosing the parameter $\mu$ and the corresponding \emph{conditions} such that a descent in $\phi$ is guaranteed.

Since the $l^1$ norm is not differentiable everywhere, we begin by taking a Taylor series expansion of $\gamma$ and $C-s$ for $\alpha \in (0,1]$:
\begin{align*}
    &\phi(\mathbf{u}+\alpha p^{\mathbf{u}},\lambda+\alpha p^\lambda, s+\alpha p^s;\mu) - \phi(\mathbf{u},\lambda, s;\mu) \\
    &\leq \alpha \nabla_{\mathbf{u},\lambda}\gamma\begin{bmatrix} p^{\mathbf{u}} \\ p^\lambda \end{bmatrix} + \mu\|C-s + \alpha(Gp^\mathbf{u}-p^s)\|_1 \nonumber \\
    & \qquad - \mu\|C-s\|_1 + \beta \alpha^2 \|p\|_2^2 \\
    & = \alpha\left(\nabla_{\mathbf{u},\lambda}\gamma\begin{bmatrix} p^{\mathbf{u}} \\ p^\lambda \end{bmatrix} - \mu\|C-s\|_1\right) + \beta \alpha^2 \|p\|_2^2,
\end{align*}
where the term $\beta \alpha^2 \|p\|_2^2$ bounds the second derivative terms for some $\beta > 0$. Following a similar logic, we can obtain the bound in the other direction:
\begin{align*}
    &\phi(\mathbf{u}+\alpha p^{\mathbf{u}},\lambda+\alpha p^\lambda, s+\alpha p^s;\mu) - \phi(\mathbf{u},\lambda, s;\mu) \\
    &\geq \alpha\left(\nabla_{\mathbf{u},\lambda}\gamma\begin{bmatrix} p^{\mathbf{u}} \\ p^\lambda \end{bmatrix} - \mu\|C-s\|_1\right) - \beta \alpha^2 \|p\|_2^2.
\end{align*}
Dividing the inequality chain by $\alpha$ and taking the limit $\alpha\to 0$, we obtain the directional derivative:
\begin{align} \label{eq:merit_function_directional_derivative}
    D(\phi(\mathbf{u},\lambda,s;\mu), p^{\mathbf{u}}, p^\lambda) = \nabla_{\mathbf{u},\lambda}\gamma\begin{bmatrix} p^{\mathbf{u}} \\ p^\lambda \end{bmatrix} - \mu\|C-s\|_1
\end{align}
From \eqref{eq:merit_function_directional_derivative}, it should be clear that given $C-s \neq 0$, there exists a value for $\mu > 0$ such that the directional derivative is negative. As such, we propose the following expression to compute $\mu$, given some $\rho \in (0, 1)$:
\begin{align} \label{eq:merit_parameter}
    \mu \geq \left(\nabla_{\mathbf{u},\lambda}\gamma\begin{bmatrix} p^{\mathbf{u}} \\ p^\lambda \end{bmatrix}\right)/((1-\rho)\|C-s\|_1),
\end{align}
which results in $D(\phi(\mathbf{u},\lambda;\mu), p^{\mathbf{u}}, p^\lambda) \leq -\rho\mu\|C-s\|_1$.

In the case when $C-s = 0$, we unfortunately cannot guarantee that the directional derivative will always be negative. To see this, let us analyze the first term of \eqref{eq:merit_function_directional_derivative}:
\begin{align} \label{eq:stationarity_norm_directonal_derivative}
    \nabla_{\mathbf{u},\lambda}\gamma\begin{bmatrix} p^{\mathbf{u}} \\ p^\lambda \end{bmatrix} &= \nabla \mathcal{L}^\top \begin{bmatrix} L & G^\top \end{bmatrix} \begin{bmatrix} p^{\mathbf{u}} \\ p^\lambda \end{bmatrix} \nonumber \\
        &= \nabla \mathcal{L}^\top \begin{bmatrix} B + R & G^\top \end{bmatrix} \begin{bmatrix} p^{\mathbf{u}} \\ p^\lambda \end{bmatrix} \nonumber \\
        &= -\nabla \mathcal{L}^\top \nabla \mathcal{L} + \nabla \mathcal{L}^\top R p^\mathbf{u},
\end{align}
where $R$ denotes the residual matrix i.e. $R = L-B$ and we arrive at the third equality by plugging in for the stationarity condition from \eqref{eq:sqp_qp}. From \eqref{eq:stationarity_norm_directonal_derivative}, it should be immediately apparent that when the residual matrix is large and dominates the first term, it is possible for the directional derivative to be positive. This can be interpreted as a condition on how well the positive definite $B$ actually approximates the original stacked Hessian matrix $L$. For dynamic games where the agents have highly coupled and differing objectives, i.e. when $L$ is highly non-symmetric, it is reasonable to expect that the approximation would suffer and that we may not be able to achieve a decrease in the merit function. For this reason, we utilize a non-monotone strategy for the line search step, which will be discussed in the following.

\setlength{\textfloatsep}{0pt}
\begin{algorithm}[t!]
    \SetAlgoLined
	\KwIn{$\mathbf{u}$, $\lambda$, $s$, $p^\mathbf{u}$, $p^\lambda$, $p^s$, $P$, $\zeta$, $\tau$}
    $\phi \leftarrow \phi(\mathbf{u}, \lambda, s)$, $D\phi \leftarrow D(\phi(\mathbf{u},\lambda,s), p^{\mathbf{u}}, p^\lambda)$\;
    $\bar{\mathbf{u}}_0 \leftarrow \mathbf{u} + p^\mathbf{u}$, $\bar{\lambda}_0 \leftarrow \lambda + p^\lambda$, $\bar{s}_0 \leftarrow s + p^s$\;
    $\phi_0 \leftarrow \phi(\bar{\mathbf{u}}_0, \bar{\lambda}_0, \bar{s}_0)$\;
	\For{$i \in \{0, \dots, P-1\}$}{
        \If{$\phi_i \leq \phi + \zeta D\phi$}{
            \Return $\bar{\mathbf{u}}_i$, $\bar{\lambda}_i$\;
        }
	    $B_i, \ h_i, \ G_i, \ C_i \leftarrow$ \eqref{eq:sqp_approximation}\;
	    $p_i^\mathbf{u}, \ p_i^\lambda \leftarrow$ \eqref{eq:sqp_qp}, \eqref{eq:dual_step}\;
	    $\bar{\mathbf{u}}_{i+1} \leftarrow \bar{\mathbf{u}}_i + p_i^{\mathbf{u}}$, $\bar{\lambda}_{i+1} \leftarrow \bar{\lambda}_i + p_i^\lambda$\;
        $\bar{s}_{i} \leftarrow \min(0, C_i)$, $p_{i}^s \leftarrow C_i + G_i p_i^\mathbf{u} - \bar{s}_i$\;
        $\bar{s}_{i+1} \leftarrow \bar{s}_i + p_{i}^s$\;
        $\phi_{i+1} \leftarrow \phi(\bar{\mathbf{u}}_{i+1}, \bar{\lambda}_{i+1}, \bar{s}_{i+1})$\;
	}
    \If{$\phi_P \leq \phi + \zeta D\phi$}{
        \Return $\bar{\mathbf{u}}_P$, $\bar{\lambda}_P$\;
    }
    $B_P, \ h_P, \ G_P, \ C_P \leftarrow$ \eqref{eq:sqp_approximation}\;
    $p_P^\mathbf{u}, \ p_P^\lambda \leftarrow$ \eqref{eq:sqp_qp}, \eqref{eq:dual_step}\;
    $\bar{s}_{P} \leftarrow \min(0, C_P)$, $p_{P}^s \leftarrow C_P + G_P p_P^\mathbf{u} - \bar{s}_P$\;
    Find largest $\alpha \in (0, 1]$ such that $\phi(\bar{\mathbf{u}}_P + \alpha p_P^{\mathbf{u}}, \bar{\lambda}_P + \alpha p_P^{\lambda}, \bar{s}_P + \alpha p_P^{s}) \leq \phi_P + \zeta D\phi_P$\;
    $\bar{\mathbf{u}}_{P+1} \leftarrow \bar{\mathbf{u}}_P + \alpha p_P^{\mathbf{u}}$, $\bar{\lambda}_{P+1} \leftarrow \bar{\lambda}_P + \alpha p_P^{\lambda}$, $\bar{s}_{P+1} \leftarrow \bar{s}_P + \alpha p_P^{s}$\;
    $\phi_{P+1} \leftarrow \phi(\bar{\mathbf{u}}_{P+1}, \bar{\lambda}_{P+1}, \bar{s}_{P+1})$\;
    \eIf{$\phi_{P+1} \leq \phi + \zeta D\phi$}{
        \Return $\bar{\mathbf{u}}_{P+1}$, $\bar{\lambda}_{P+1}$\;
    }{
        Find largest $\alpha \in (0, 1]$ such that $\phi(\mathbf{u} + \alpha p^{\mathbf{u}}, \lambda + \alpha p^{\lambda}, s_P + \alpha p^{s}) \leq \phi + \zeta D\phi$\;
        \Return $\mathbf{u} + \alpha p^{\mathbf{u}}$, $\lambda + \alpha p^{\lambda}$
    }
	\caption{Watchdog Line Search}
	\label{alg:watchdog_line_search}
\end{algorithm}

\subsection{A Non-Monotone Line Search Strategy}

Line search methods are used in conjunction with merit functions to achieve a compromise between the goals of making rapid progress towards the optimal solution and keeping the iterates from diverging. This is done by finding the largest step size $\alpha \in (0, 1]$ such that the following standard decrease condition is satisfied \cite{boggs1995sequential}:
\begin{align} \label{eq:merit_decrease_condition}
    \phi(&\mathbf{u}_q+\alpha p_q^\mathbf{u}, \lambda_q+\alpha p_q^\lambda, s_q+\alpha p_q^s;\mu) \\
    &\leq \phi(\mathbf{u}_q, \lambda_q, s_q;\mu) + \zeta \alpha D(\phi(\mathbf{u}_q,\lambda_q,s_q;\mu), p_q^{\mathbf{u}}, p_q^\lambda), \nonumber
\end{align}
where $\zeta \in (0,0.5)$. However, since our merit function is not exact, the line search procedure can be susceptible to poor local minima which do not correspond to local GNE. We therefore include in our approach a non-monotone approach to line search called the \emph{watchdog} strategy \cite{conn2000trust}. Instead of insisting on a sufficient decrease in the merit function at every iteration, this approach allows for relaxed steps to be taken for a certain number of iterations, which can lead to increases in the merit function. The decrease requirement \eqref{eq:merit_decrease_condition} is then enforced after the prescribed number of relaxed iterations. The rationale behind this strategy is that we can use the relaxed steps as a way to escape regions where it is difficult to make progress w.r.t. the merit function. The algorithm is presented in Algorithm~\ref{alg:watchdog_line_search} where it is initialized with an iterate and step in the primal, dual, and slack variables. It additionally requires the parameters: $P \geq 1$ which is the maximum number of full steps to be taken, $\zeta$ for evaluating \eqref{eq:merit_decrease_condition}, and the backstepping coefficient $\tau$. In line 1, the merit function and its directional derivative are evaluated at the original iterate. In lines 2 and 3, we take the first full step and evaluate the merit function at the resulting iterate. In lines 4 to 17, the decrease condition w.r.t. to the original iterate is checked and the algorithm terminates if the condition is met. If not, a new step is computed at the current iterate and the full step is taken once again. After at most $P$ full steps, the decrease condition \eqref{eq:merit_decrease_condition} is enforced in lines 18 to 26, where line 21 corresponds to a backstepping line search procedure. In the case where the decrease condition is still not met, we return to perform a backstepping line search on the original iterate.

\section{The Dynamic Game SQP Algorithm}
\begin{algorithm}[t!]
    \SetAlgoLined
	\KwIn{$\mathbf{u}_0$, $\rho$, $\zeta$, $\tau$}
	$q \leftarrow 0$\;
	$\mathbf{u}_q \leftarrow \mathbf{u}_0$, $\lambda_q \leftarrow \max(0, -(G_0 G_0^\top)^{-1}G_0 h_0)$\;
	\While{not converged}{
	    $B_q, \ h_q, \ G_q, \ C_q \leftarrow$ \eqref{eq:sqp_approximation}\;
	    $p_q^\mathbf{u}, \ p_q^\lambda \leftarrow$ \eqref{eq:sqp_qp}, \eqref{eq:dual_step}\;
        $s_q \leftarrow \min(0, C_q)$, $p_{q}^s \leftarrow C_q + G_q p_q^\mathbf{u} - s_q$\;
	    \eIf{$C_q-s_q\neq 0$}{
            Compute $\mu$ from \eqref{eq:merit_parameter}\;
	    }{
	        $\mu \leftarrow 0$\;
	    }
	    $\mathbf{u}_{q+1}, \ \lambda_{q+1} \leftarrow$ watchdog line search\;
	    $q \leftarrow q + 1$
	}
	\Return $\mathbf{u}^\star \leftarrow \mathbf{u}_q$, $\lambda^\star \leftarrow \lambda_q$\;
	\caption{Dynamic Game SQP (DG-SQP)}
	\label{alg:dynamic_game_sqp}
\end{algorithm}

By combining the elements previously discussed, we arrive at the dynamic game SQP (DG-SQP) algorithm, which is presented in Algorithm~\ref{alg:dynamic_game_sqp}. The algorithm requires as input initial guesses of open-loop input sequences for each agent and the parameters $\rho$, $\zeta$, $\tau$. The parameter $\rho$ is used to compute the merit function parameter $\mu$ from \eqref{eq:merit_parameter}. $\zeta$ and $\tau$ are parameters for the watchdog linesearch procedure. Line 2 initializes the primal and dual iterates, where the dual variables are initialized as the least squares solution to \eqref{eq:stationarity}. Lines 3 to 14 perform the SQP iteration which has been described in Sections~\ref{sec:sqp_approach} and \ref{sec:practical_considerations}. An iterate is said to have converged to a local GNE if it satisfies the KKT conditions described in \eqref{eq:kkt} up to some user specified tolerance. Namely, for some given $\epsilon_1, \ \epsilon_2, \ \epsilon_3 > 0$, we require the conditions $\|\nabla\mathcal{L}(\mathbf{u}_q, \lambda_q)\|_\infty \leq \epsilon_1, \ \|C(\mathbf{u}_q)\|_\infty \leq \epsilon_2, \ |\lambda_q^\top C(\mathbf{u}_q)| \leq \epsilon_3$ be satisfied in order for the algorithm to terminate successfully. We additionally allow the algorithm to terminate with relative tolerance if the difference between successive primal and dual iterates are below a given threshold for a prescribed number of iterations. Note that when this occurs, the solution may not be a local GNE. The algorithm outputs the open-loop strategies for the $M$ agents and the corresponding Lagrange multipliers.

\section{Simulation Study}

In this section, we use simulation studies to demonstrate the performance of our DG-SQP algorithm and to compare our approach with the state-of-the-art GNE solver ALGAMES \cite{cleac2020algames}. We first use a ramp merge scenario to evoke a direct comparison with results reported in \cite{cleac2020algames}. We then compare the performance of the two approaches in a head-to-head car racing example to illustrate the advantages of DG-SQP. The DG-SQP algorithm was implemented in Python and can be found at \url{https://github.com/zhu-edward/DGSQP}. All results were obtained on a desktop with a 2.5 GHz 11th-Gen Intel Core i7 CPU.

\begin{figure}[t] 
    \vspace{-0.4cm}
    \centering
    \includegraphics[width=0.99\columnwidth]{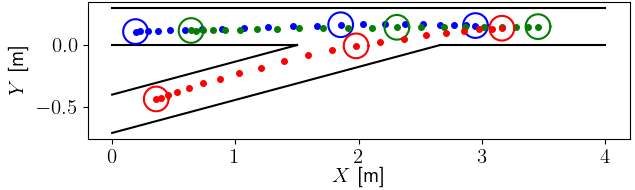}
    \vspace{-0.6cm}
    \caption{Example of a local GNE for three agents in the ramp merge scenario. The circles represent the collision avoidance constraints and the black lines indicate the road boundaries.}
    \label{fig:merge_sol}
\end{figure}

\subsection{Ramp Merge Scenario}
We first consider the ramp merge scenario which was investigated in \cite{cleac2020algames}. This scenario, illustrated in Fig.~\ref{fig:merge_sol}, involves three agents, where two are positioned on a straight road segment and the third on the ramp. The objective of the scenario is for the agent on the ramp to merge into traffic between the other two while avoiding collisions and respecting the road boundaries. We use a horizon length of $N = 20$. Like in \cite{cleac2020algames}, we conduct a Monte Carlo experiment of 1000 trials where the initial conditions of the three agents are randomly perturbed about some nominal values. As we attempted to recreate the experiment as closely as possible, we refer the reader to \cite{cleac2020algames} for additional details. Out of the 1000 trials, we observed that 99.3\% reached convergence with DG-SQP after an average of 12 iterations. This is comparable to the 99.5\% reported in \cite{cleac2020algames}, which required an average of 9 iterations of ALGAMES. Additionaly, for the successful trials, we observed an average solve time of 0.46s, and that 96\% converged to a solution within 0.8s. The results of this study show that our approach achieves similar performance in terms of solver success rate for the previously investigated ramp merge scenario. We note that while our execution times are certainly slower than the reported times for ALGAMES, we attribute this difference primarily to the fact that ALGAMES was implemented in Julia, which is a compiled language. The advantages of our approach will be clearly illustrated in the following car racing scenario.

\subsection{Head-to-head Car Racing Scenario}
We next demonstrate our DG-SQP algorithm in the context of head-to-head car racing, where our approach would be used by a vehicle to simultaneously obtain an open-loop control sequence and predictions of its opponent's behavior. We assume that no information about future plans is shared between agents and that they must avoid collisions with each other while also remaining within the boundaries of the track. In our examples, the agents are described by the kinematic bicycle model \cite{kong2015kinematic} with the following state and input vectors:
\begin{align*}
    x = [p_x, p_y, v_x, e_\psi, s, e_y]^{\top} \in \mathbb{R}^6, \ u = [a, \delta]^\top \in \mathbb{R}^2,
\end{align*}
where $p = (p_x, p_y)$ are the Cartesian position and $v_x$ is the longitudinal velocity of the vehicle's center of gravity (CoG). The remaining states are expressed in a Frenet reference frame \cite{micaelli1993trajectory} which is defined w.r.t. the centerline of the track. $s$ and $e_y$ denote the distance travelled along the track and lateral deviation from the centerline, of the CoG, respectively. $e_\psi$ is the deviation between the vehicle's heading and the tangent angle of the track centerline at $s$ (see \cite{rosolia2019learning} for detailed expressions). We obtain the discrete-time dynamics using Euler discretization with a time step of $T_s = 0.1$ s. The inputs to the vehicle are the longitudinal acceleration $a$ and front wheel steering angle $\delta$. By modeling the vehicle in this way, we can define the collision avoidance and track boundary constraints using the simple expressions
\begin{align*}
    (r^i+r^j)^2 - \|p^i-p^j\|_2^2 \leq 0, \ -W/2 \leq e_y^i \leq W/2,
\end{align*}
where $r^i$ and $r^j$ are the radii of the circular collision buffers for agents $i$ and $j$ respectively and $W$ is the width of the track. We subject the input magnitude and rate to identical box constraints for all agents.

\subsubsection{Comparison with ALGAMES}

\begin{figure}[t] 
    \centering
    \includegraphics[width=0.9\columnwidth]{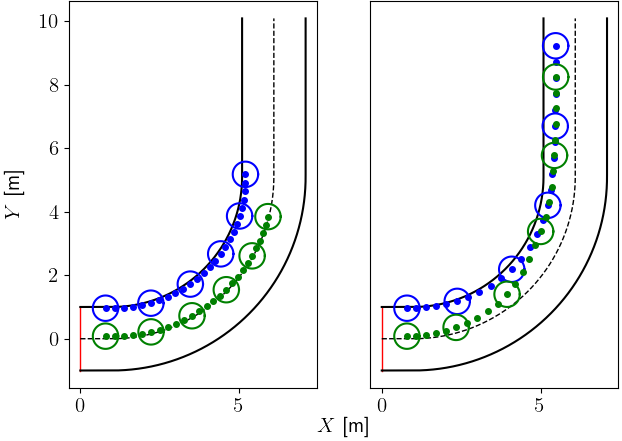}
    \vspace{-0.1cm}
    \caption{Example of the initial guess (left) and a corresponding local GNE (right) of horizon length $N=25$ for two agents on a curved track segment with a 90\degree \ turn. The circles represent the collision avoidance constraints.}
    \label{fig:init_v_sol}
    \vspace{-0.3cm}
\end{figure}
\begin{figure}[t] 
    \centering
    \includegraphics[width=0.99\columnwidth]{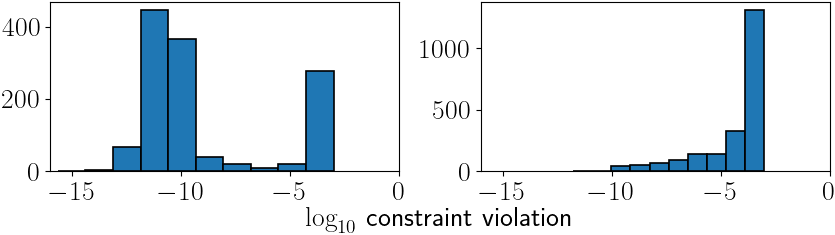}
    \vspace{-0.6cm}
    \caption{Largest non-zero constraint violation at solver convergence of DG-SQP (left) and ALGAMES (right).}
    \label{fig:converged_feasibility}
\end{figure}
\begin{figure}[t] 
    \centering
    \includegraphics[width=0.99\columnwidth]{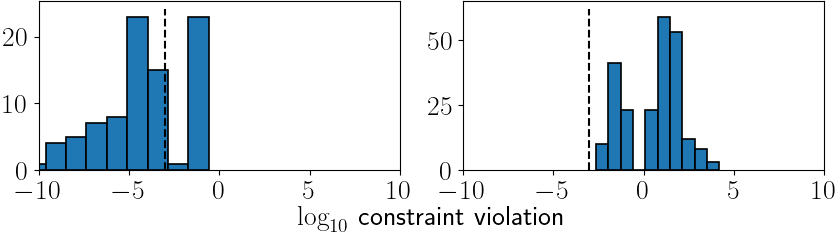}
    \vspace{-0.6cm}
    \caption{Largest non-zero constraint violation at solver failure of DG-SQP (left) and ALGAMES (right). The dashed line corresponds to $10^{-3}$.}
    \label{fig:failed_feasibility}
    \vspace{-0.4cm}
\end{figure}
\begin{figure}[t] 
    \centering
    \includegraphics[width=0.99\columnwidth]{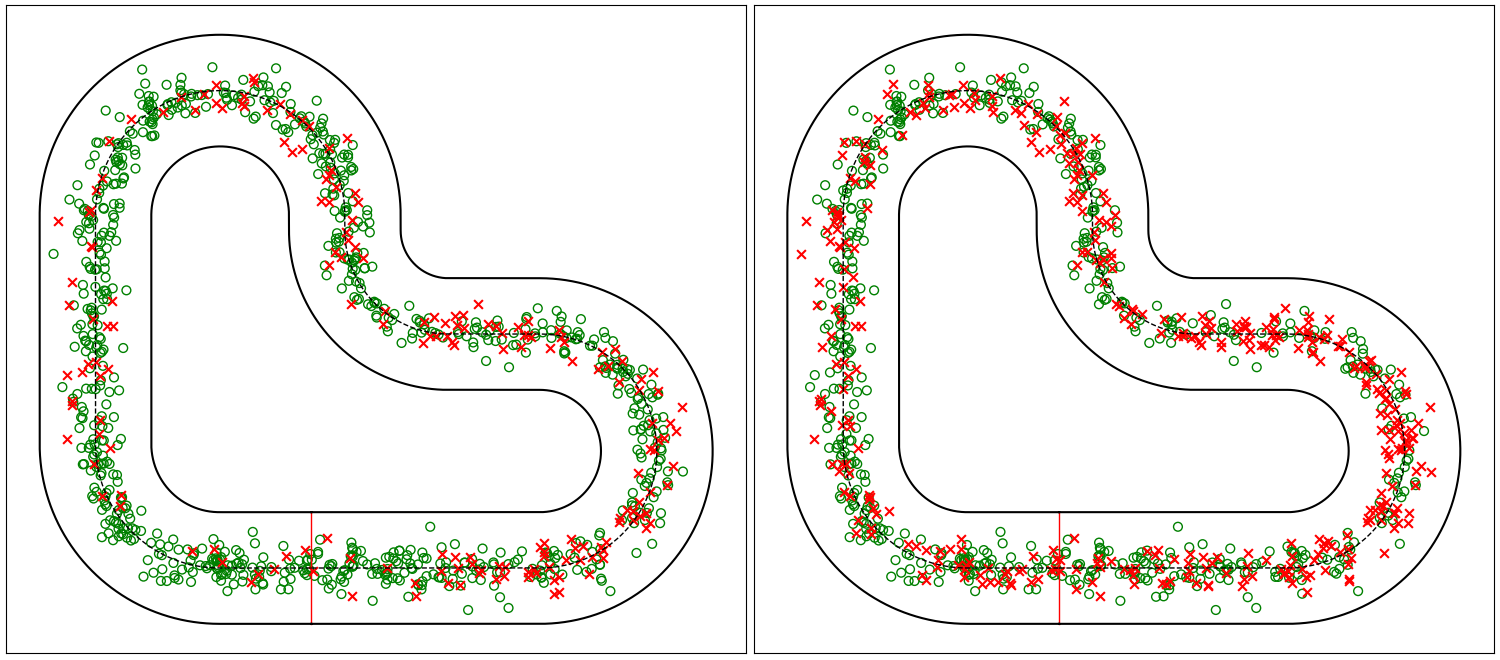}
    \vspace{-0.5cm}
    \caption{Monte Carlo results on a racetrack for DG-SQP (left) and ALGAMES (right). Each point corresponds to the average of the sampled initial position for the two agents. {\color{Green} $\circ$}, {\color{Red} $\times$} denote successful and failed trials respectively.}
    \label{fig:race}
\end{figure}
\setlength{\tabcolsep}{2pt}
\renewcommand{\arraystretch}{0.7}
\setlength{\heavyrulewidth}{1.5pt}
\begin{table*}[t]
\centering
\caption{Monte Carlo results for $M=2$ agents on a curved track with $\theta$ degree turns and game horizon length $N$. The first and second row for the \# Conv, \# Fail, \# Max Iters. Reached, and \# Solves statistics correspond to DG-SQP and ALGAMES respectively. The Time statistic is reported only for DG-SQP. For the Time statistic, the mean and (standard deviation) in seconds over all successful trials are reported.}
\vspace{-0.2cm}
\begin{tabular}{c|*{4}{c}|*{4}{c}|*{4}{c}} 
$\theta$ & \multicolumn{4}{c|}{45} & \multicolumn{4}{c|}{75} & \multicolumn{4}{c}{90} \\
$N$ & 10 & 15 & 20 & 25 & 10 & 15 & 20 & 25 & 10 & 15 & 20 & 25 \\
\midrule[1pt]
\multirow{2}{*}{\# Conv} & \textbf{200} & \textbf{199} & \textbf{199} & \textbf{185} & \textbf{200} & \textbf{199} & \textbf{195} & \textbf{169} & \textbf{200} & \textbf{197} & \textbf{193} & \textbf{172} \\
 & \textbf{200} & \textbf{199} & 189 & 172 & \textbf{200} & 192 & 180 & 136 & 198 & 193 & 167 & 142 \\
\midrule[0.5pt]
\multirow{2}{*}{\# Fail} & 0 & 0 & 0 & 0 & 0 & 0 & 0 & 0 & 0 & 0 & 0 & 0 \\
 & 0 & 0 & 9 & 16 & 0 & 6 & 16 & 39 & 1 & 4 & 24 & 43 \\
\midrule[0.5pt]
\# Max Iters. & 0 & 1 & 1 & 15 & 0 & 1 & 5 & 31 & 0 & 3 & 7 & 28 \\
Reached & 0 & 1 & 2 & 12 & 0 & 2 & 4 & 25 & 1 & 3 & 9 & 15 \\
\midrule[0.5pt]
\multirow{2}{*}{\# Solves} & 3 & 6 & 14 & 17 & 3 & 9 & 21 & 25 & 4 & 9 & 25 & 17 \\
 & 9 & 16 & 28 & 36 & 12 & 23 & 39 & 64 & 12 & 28 & 57 & 77 \\
\midrule[0.5pt]
Time [s] & 0.02(0.01) & 0.06(0.02) & 0.24(0.14) & 0.49(0.46) & 0.03(0.01) & 0.09(0.08) & 0.44(0.26) & 0.89(0.51) & 0.03(0.02) & 0.10(0.07) & 0.41(0.19) & 0.47(0.30)
\end{tabular}
\label{tab:mc_results}
\vspace{-0.6cm}
\end{table*}

We compare DG-SQP against a custom Python implementation of ALGAMES, which has also been made available in our source code. In this comparison study, we assume that all agents have identical dynamics and use the following cost functions:
\begin{subequations} \label{eq:cost_fns}
\begin{align}
    &l_k^i(x_k, u_k^i) = \frac{1}{2}{u_k^i}^\top R^i u_k^i + \frac{1}{2}{\Delta u_k^i}^\top R_d^i \Delta u_k^i \label{eq:comp_stage} \\
    &l_N^i(x_N) = -c_p s_N^i + c_c \sum_{j \neq i}\arctan(s_N^i-s_N^j), \label{eq:comp_term}
\end{align}
\end{subequations}
where $\Delta u_k^i = u_k^i-u_{k-1}^i$. The stage cost \eqref{eq:comp_stage} penalizes the input magnitude and rate with $R^i, R_d^i \succ 0$. The terminal cost \eqref{eq:comp_term} captures the competitive nature of racing with $c_p, c_c > 0$, where the first term encourages progress along the track and the second term is made small when agent $i$ is ahead of all other agents. Note that the dynamics considered in this example are made more complex by the addition of the Frenet frame states, which depend on the curvature of the track. 

We conduct the first comparison on a segment of curved track to examine the effect of track curvature and game horizon length on the performance of the GNE solvers. In particular, we perform a Monte Carlo analysis by randomly sampling feasible initial states near the start of the track segment and roll out the initial guess via a PID controller which maintains the car's speed and lateral deviation from the centerline. This initial guess is then used to initialize both algorithms. An example of this can be seen in the left plot of Fig.~\ref{fig:init_v_sol}. We do 200 trials each on tracks with 45, 75, and 90 degree turns for horizon lengths of 10, 15, 20, and 25. The results are presented in Tab.~\ref{tab:mc_results}. A trial is said to be successful if the iterates converge to a point where the KKT conditions \eqref{eq:kkt} are satisfied with tolerance $10^{-3}$. In Fig.~\ref{fig:converged_feasibility}, it can be seen that when successful, both algorithms return solutions which satisfy the constraints. 

Looking at the number of successful trials in the first row of Tab.~\ref{tab:mc_results}, it is clear that DG-SQP and ALGAMES perform similarly well for track segments with low curvature. These scenarios are similar to those investigated in \cite{cleac2020algames}, where dynamic games are played out on straight sections of road, and our results appear to further corroborate the high success rate of ALGAMES in these situations. However, as the curvature of the track or horizon length increases, we see that, while both solvers experience failures, DG-SQP outperforms ALGAMES in terms of success rate, showing a 21\% improvement in the case of $\theta=90\degree$ and $N=25$. In row four of Tab.~\ref{tab:mc_results} we report the average number of solves for the two approaches. For DG-SQP and ALGAMES, this corresponds to the number of QPs and linear systems that are solved, respectively. 
The average run time of successful trials of DG-SQP (in seconds) is shown in the fifth row. From this, we see that for horizons of moderate length, our Python implementation can achieve near real-time performance. We believe that an efficient implementation of DG-SQP in a compiled language would achieve a similar level of real-time performance as reported for ALGAMES in \cite{cleac2020algames}.

We turn next to the failure cases for the two solvers, where we have split them into the two categories \emph{\# Fail} in row two and \emph{\# Max Iters. Reached} in row three. \emph{\# Fail} counts the trials of ALGAMES, where the iterations diverged, which is defined as $\|\nabla\mathcal{L}(\mathbf{u}_q, \lambda_q)\|_\infty > 10^5$. For DG-SQP it additionally counts trials where the QP in \eqref{eq:sqp_qp} returned infeasible. \emph{\# Max Iters. Reached} counts the trials where a solver reached 50 iterations without converging. From these results, it is clear that the majority of the ALGAMES failure cases (68.1\%) were due to the solver diverging, whereas DG-SQP fails primarily due to reaching the maximum allowable iterations. We observe that in many of the DG-SQP failure cases, the iterates exhibit oscillatory behavior and fail the stationarity requirement of \eqref{eq:stationarity}. However, it is important to note that even in failure, most of the DG-SQP iterates remain feasible at termination (up to a tolerance of $10^{-3}$), whereas for ALGAMES, they do not. This is clearly seen in Fig.~\ref{fig:failed_feasibility}, which shows the distribution of constraint violation for the iterates returned by failed trials. We believe that this is due to the explicit linearized inequality constraints in DG-SQP. In ALGAMES, the iterates may not satisfy any form of the inequality constraint (linearized or otherwise), especially when the estimate of the Lagrange multipliers is poor. In practice, this is important as it is likely that in the case of failure, the outputs of DG-SQP, though not local GNE, can still be used as feasible solutions to the dynamic game.

In order to investigate the performance of the solvers in a more diverse set of racing conditions, we conduct a second comparison on the racetrack shown in Fig.~\ref{fig:race}. For this study, we fix the game horizon to $N=15$ and randomly sample the initial states of two agents such that an interaction is likely to occur over the horizon. Specifically, the agents start within 1.2 car lengths of each other and are traveling in the counter clockwise direction at velocities that exhibit at most a 25\% difference. Out of the 1000 trials, we observe that DG-SQP was successful in 814 trials whereas ALGAMES was successful in 618 trials. This is a 31.7\% improvement in success rate. Of the trials where the solvers failed, DG-SQP (ALGAMES) saw 13 (203) instances of divergence or QP failure and 185 (179) instances of reaching max iterations. Fig.~\ref{fig:race} illustrates the distribution of solver outcomes w.r.t. the averaged initial positions of the two agents. We note that failure cases for DG-SQP occur more frequently in sections of the racetrack with high curvature. This mirrors the trend observed in the previous study and highlights a limitation of our approach which will be investigated in future work.

\subsubsection{Effect of Merit Function and Non-Monotone Strategy}

\begin{figure}
    \centering
    \includegraphics[width=0.99\columnwidth]{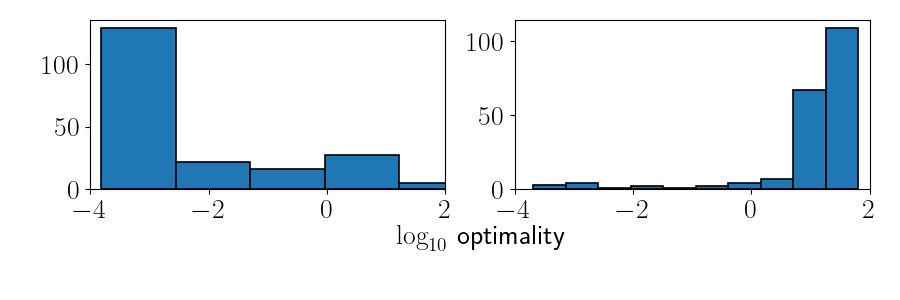}
    \vspace{-1.0cm}
    \caption{Stationarity, i.e. $\|\nabla\mathcal{L}(\mathbf{u}_q, \lambda_q)\|_\infty$, of iterate at termination of DG-SQP using the non-monotone line search with the novel merit function from Section~\ref{sec:practical_considerations} (left) and standard backstepping line search with the merit function from \cite{laine2021computation} (right).}
    \label{fig:ablation_results}
\end{figure}

We next investigate the effect of the merit function and non-monotone line search strategy proposed in Section~\ref{sec:practical_considerations}. We are especially interested in the case where the agent objectives are different as this induces additional asymmetry in the matrix $L_q$, which adversely affects the quality of our symmetric approximation $B_q$ in \eqref{eq:sqp_qp}. For this study, we consider the two agent scenario where the car in front (Agent 1) would like to block the one behind (Agent 2) and impede their progress. As such, we add the following term to the stage and terminal cost functions in \eqref{eq:cost_fns} for Agent 1:
\begin{align*}
    (1/2)c_b(e_{y,k}^1-e_{y,k}^2)^2,
\end{align*}
with $c_b > 0$ and keep the cost function for Agent 2 the same. This additional term encourages Agent 1 to match the lateral position of Agent 2 on the track. We compare the terminal value of $\|\nabla\mathcal{L}(\mathbf{u}_q, \lambda_q)\|_\infty$ for our approach with a variant of DG-SQP which uses a traditional backstepping line search and the merit function from \cite{laine2021computation}. The results are shown in Fig.~\ref{fig:ablation_results} for 200 trials on a track with $\theta=90\degree$ and $N=25$, where it is clear that our proposed modifications greatly improve the quality of the solution at termination of the DG-SQP algorithm. We observe a median stationarity of $9.369\times 10^{-4}$, compared to $19.76$ when none of the proposed modifications are used. 

\section{Future Work}

In this work, we have presented an SQP based approach to the solution of open-loop generalized Nash equilibria for dynamic games, which has provable local convergence and good practical performance stemming from novel modifications to the standard SQP algorithm from \cite{laine2021computation}. We have shown that, in the context of head-to-head car racing, our approach outperforms the state-of-the-art ALGAMES solver in terms of success rate. However, we observe that our approach can still suffer from convergence issues when dealing with problems with long horizons, highly nonlinear dynamics, or highly asymmetric cost. In future work, we hope to address these shortcomings by further investigating merit functions and line search strategies which can be tailored to dynamic games. We will also aim to implement the algorithm in a compiled language for real-time execution on a hardware platform.




\bibliographystyle{IEEEtran}
\bibliography{main.bib}

\end{document}